\newcommand{\setfont}{\mathbb}
\newcommand{\real}{\ensuremath{\setfont{R}}}
\renewcommand{\natural}{\ensuremath{\setfont{N}}}
\newtheorem{theorem}{Theorem}
\newtheorem{proposition}{Proposition}
\newcommand{\Gr}{\mathcal{G}} 
\newcommand{\Kg}{\mathcal{K}}
\newcommand{\lm}{\underline{\lambda}}
\newcommand{\lp}{\bar{\lambda}}
\begin{document}

\title{Accelerating Consensus by Spectral Clustering and Polynomial Filters}

\author{Simon~Apers \thanks{S. Apers is with the SYSTeMS group, Ghent University, Belgium; e-mail: simon.apers@ugent.be} 
        \,and~Alain~Sarlette \thanks{A. Sarlette is with the QUANTIC project team, INRIA Rocquencourt, France and the SYSTeMS group, Ghent University, Belgium; e-mail: alain.sarlette@inria.fr}}

\maketitle

\begin{abstract}
It is known that polynomial filtering can accelerate the convergence towards average consensus on an undirected network. In this paper the gain of a second-order filtering is investigated. A set of graphs is determined for which consensus can be attained in finite time, and a preconditioner is proposed to adapt the undirected weights of any given graph to achieve fastest convergence with the polynomial filter. The corresponding cost function differs from the traditional spectral gap, as it favors grouping the eigenvalues in two clusters. A possible loss of robustness of the polynomial filter is also highlighted.
\end{abstract}
  

\section{Consensus acceleration} \label{sec:introduction}

Since their introduction in \cite{tsitsiklis1984}, (discrete-time) consensus algorithms have attracted almost as much attention as their dual, fast mixing Markov chains \cite{boyd2004,chen1999}. Improving the convergence speed of this basic building block for e.g.~distributed computation and sensor fusion has been a major focus. For synchronized fixed networks, one can optimize the weights on the links \cite{boyd2004}, add local memory \cite{muthukrishnan1998}, or introduce time-varying filters \cite{montijano2013,hendrickx2014}. The present paper establishes the benefit of combining polynomial filtering with optimization of link weights.

Consider an undirected and connected graph $\mathcal{G}(\mathcal{V},\mathcal{E})$ with $N$ nodes $\in \mathcal{V}$ and $M$ edges $\in \mathcal{E}$. Denote the node states as $x = (x_1\,, x_2,...,\,x_N) \in \mathbb{R}^N$. The basic linear consensus dynamics on $\mathcal{G}$ is
\begin{equation} \label{eq:stdcons}
x_i(k\!+\!1) = x_i(k)+w\smashoperator{\sum_{(i,j)\in \mathcal{E}}}L_{ij}(x_j(k)\text{-}x_i(k))
\end{equation}
with $k\in\natural$, $w \in \mathbf{R}$ some gain and $L$ a symmetric matrix of edge weights called the Laplacian. We can rewrite \eqref{eq:stdcons} as 
$$x(k\!+\!1)= (I-wL)\,x(k) = P_w\,x(k) \, ,$$ 
with $I$ the identity matrix.
\emph{Consensus} is the stationary state satisfying
    \begin{equation*}
     x_1=x_2=...=x_N =: c \;\; \text{or equivalently} \;\; P_w\, x = x
    \end{equation*}
(if the graph is connected); and \emph{average consensus} requires the consensus value $c$ to satisfy $c=\frac{1}{N}\sum_{i=1}^Nx_i(0)$.
The convergence speed of \eqref{eq:stdcons} towards consensus is governed by $P_w$'s second largest eigenvalue modulus (SLEM)
\begin{equation*}
\mu(P_w) = \max\{|x^TP_wx|\,: x^T x \leq 1,\,{\textstyle \sum_{k=1}^N} x_k=0\}
\end{equation*}
or the ``spectral gap'' $1-\mu(P_w)$. For a given graph, the fastest convergence is obtained by choosing the edge weights to maximize the spectral gap. This is a convex problem \cite{boyd2004}, whose solution we call the Fastest Single-Step Convergence network (FSSC)\footnote{Boyd et al. study the optimization for Markov chains, with the constraint $[P_w]_{i,j}>0$ for all $i,j$. For consensus that constraint also favors robustness to network changes. In the present paper we drop it, for a fairer comparison with other algorithms; the optimization remains a convex problem.}.
In case of a fixed $L$, with knowledge of bounds $\lm<\lp$ on its eigenvalues, the optimal $\mu$ is obtained by choosing $w$ such that $\mu = (1\text{-}w\lm) = - (1\text{-}w\lp)$.\\

\textit{Accelerated consensus} denotes the expansion of node and communication features, under the same graph constraint, to improve convergence speed. 

In \cite{muthukrishnan1998} and later papers, \emph{memory slots} are added at each node to get
\begin{equation}\label{eq:memslot}
x(k\!+\!1) = (I-w_1 L)x(k) + w_2 (x(k\!-\!1)-x(k))
\end{equation}
with some fixed $w_1,w_2 \in \mathbb{R}$. Extending the memory registers does not improve convergence speed if only bounds $\lm < \lp$ on the eigenvalues of a fixed $L$ are known~\cite{sarlette2014}.

Another approach is \emph{polynomial filtering}, where a time-varying choice of $P_w$ accelerates convergence \cite{montijano2013,hendrickx2014}. The idea is that if $L$ has eigenvectors $\tilde{x}_i$ with eigenvalues $\lambda_i$, then after $t$ steps of $P_w(k)=I-w(k) L$, each eigenvector has been multiplied by $p_t(\lambda_i) = \Pi_{k=1}^t (1-w(k) \lambda_i)$. Through choosing a set of $w(k)$, $p_t$ can be any polynomial of order $t$ satisfying $p_t(0)=1$; whereas constant $w$ restricts to $p_t(\lambda_i) = (1-w\lambda_i)^t$. 
Choosing $w(k) = 1/\lambda_k$ would imply finite-time convergence. The latter requires not only to implement a high-order polynomial filter (see Section \ref{sec:implementation}), but also to know the eigenvalues exactly. In the opposite case where we only know that the eigenvalues lie in a bounded interval, we must apply the following result.

\begin{proposition}\label{prop1}
If the eigenvalues of $L$ span the whole interval $\lambda_i \in [\lm,\lp]$, then the optimal memory slot dynamics \eqref{eq:memslot} is at least as fast as \eqref{eq:stdcons} with any time-varying $w$ (i.e.~any order of polynomial filtering).
\end{proposition}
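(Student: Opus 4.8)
\emph{Proof plan.} I would diagonalise $L$ and compare the two schemes one eigenmode at a time. Writing the error $e(k)=x(k)-c\mathbf 1$ in an eigenbasis of $L$: one step of \eqref{eq:stdcons} multiplies the component along an eigenvalue $\lambda$ by $1-w(k)\lambda$, so $t$ steps multiply it by $p_t(\lambda)=\prod_{k=1}^t(1-w(k)\lambda)$, an arbitrary real polynomial of degree $\le t$ with $p_t(0)=1$; one step of \eqref{eq:memslot} maps $(\xi(k),\xi(k-1))$ to $(\xi(k+1),\xi(k))$ through the fixed companion matrix $C_\lambda=\bigl(\begin{smallmatrix}1-w_1\lambda-w_2 & w_2\\ 1 & 0\end{smallmatrix}\bigr)$, whose determinant is $-w_2$. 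The hypothesis that the eigenvalues fill $[\lm,\lp]$ is what makes the worst-case contraction of either scheme over the spectrum equal to its worst case over the whole interval; it is essential, since for a sparse spectrum a low-order filter vanishing at the eigenvalues would converge in finite time and beat \eqref{eq:memslot}.

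\emph{Step 1: lower bound on polynomial filtering.} I would note that after $t$ steps the worst-case contraction of \eqref{eq:stdcons} is $\max_{\lambda\in[\lm,\lp]}|p_t(\lambda)|\ge E_t$, where $E_t:=\min\{\,\max_{[\lm,\lp]}|q|:\deg q\le t,\ q(0)=1\,\}$. By the classical Chebyshev extremal property, $E_t=1/T_t(\sigma)$ with $T_t$ the Chebyshev polynomial of degree $t$ and $\sigma:=(\lp+\lm)/(\lp-\lm)>1$. Setting $\omega:=\sigma+\sqrt{\sigma^2-1}>1$ one has $\sigma=\tfrac12(\omega+\omega^{-1})$ and $\omega^{-1}=r$ with $r:=(\sqrt{\lp}-\sqrt{\lm})/(\sqrt{\lp}+\sqrt{\lm})$, so $T_t(\sigma)=\tfrac12(\omega^t+\omega^{-t})<\omega^t$ yields $E_t>r^t$. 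Hence every polynomial filter, of any order and with any weights, has asymptotic per-step contraction factor $E_t^{1/t}>r$, tending to $r$ only as $t\to\infty$.

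\emph{Step 2: the optimal memory slot attains rate $r$.} I would take $w_2=-r^2$ and $w_1=(1-r)^2/\lm=(1+r)^2/\lp$ (the two expressions coincide by the definition of $r$) and check, by a short computation, that for every $\lambda\in[\lm,\lp]$ the characteristic polynomial $z^2-(1-w_1\lambda-w_2)z-w_2$ has non-positive discriminant, hence two complex-conjugate roots of modulus $\sqrt{-w_2}=r$, which degenerate to a real double root $\pm r$ only at $\lambda=\lm$ and $\lambda=\lp$. Therefore $\|C_\lambda^k\|\le K(1+k)\,r^k$ uniformly in $\lambda\in[\lm,\lp]$ (the linear prefactor coming from the two defective endpoint modes), while for $\lambda=0$ the roots are $1$ and $r^2$, so with the initialisation $x(-1)=x(0)$ the $r^2$-component vanishes and the average is preserved. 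Thus $\|e(k)\|\le K(1+k)r^k\|e(0)\|$, i.e.\ the optimal \eqref{eq:memslot} has asymptotic per-step contraction factor $r$.

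\emph{Conclusion and main obstacle.} Steps 1 and 2 combine to: asymptotic rate of the optimal memory slot $=r\le E_t^{1/t}=$ rate of the order-$t$ polynomial filter, for every $t$ — so \eqref{eq:memslot} is at least as fast as \eqref{eq:stdcons} with any time-varying $w$ (in fact strictly faster than each finite-order filter). I expect the bulk of the work to be Step 1: the inequality $E_t>r^t$ relies on the Chebyshev extremal theorem and genuinely uses the ``eigenvalues fill the interval'' hypothesis, and on the memory-slot side one must verify that $w_2<0$ is admissible in \eqref{eq:memslot}, handle the $\lambda=0$ mode, and argue that the $1+k$ prefactor from the defective endpoint modes does not affect the exponential rate.
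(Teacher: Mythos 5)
Your proof is correct and follows essentially the same route as the paper: both reduce the order-$t$ filter to the Chebyshev extremal bound $1/T_t(\sigma)$ with $\sigma=(\lp+\lm)/(\lp-\lm)=1/\mu$, and compare it to the memory-slot worst-case rate $r=\sigma-\sqrt{\sigma^2-1}$ established in \cite{muthukrishnan1998,sarlette2014}. The only difference is one of self-containedness: you prove the key inequality $1/\left|T_t(\sigma)\right|^{1/t}\geq r$ (the paper's \eqref{eq:montconv}, justified there only by ``a function plot'') algebraically via $T_t(\sigma)=\tfrac12(\omega^t+\omega^{-t})$ with $\omega=1/r$, and you verify the memory-slot rate and the handling of the $\lambda=0$ mode directly instead of citing them.
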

\begin{proof}
\cite{montijano2013} gives the optimal polynomial filter and its worst-case long-term convergence speed: for $w$ such that $\mu = (1\text{-}w\lm) = - (1\text{-}w\lp)$ we have 
$\max_{\left| 1-w \lambda \right|< \mu} [p_t(\lambda)] = 1/\left| T_t(1/\mu) \right| \,$
with $p_t$ rescaled from a $t$-order Chebyshev polynomial $T_t$. A function plot readily shows that
\begin{equation}\label{eq:montconv}
\frac{1}{\left| T_t(1/\mu) \right|^{1/t}} \geq \frac{1}{\mu}-\sqrt{\frac{1}{\mu^2}-1}
\end{equation}
for all $\mu \in (0,1)$, with equality as $t\rightarrow +\infty$. The right side of \eqref{eq:montconv} is precisely the worst-case convergence speed with optimal \eqref{eq:memslot}, see \cite{muthukrishnan1998,sarlette2014}.
\end{proof}

Whether a synchronous variation of $w$ or a local memory slot is a more demanding resource, is application dependent. One local memory allows to evaluate $T_t$ with a stable recurrence \cite{muthukrishnan1998}. Our new observation with Proposition \ref{prop1} is that with respect to this implementation, \eqref{eq:memslot} is superior.\\

However, in the following, we show how much more can be gained by the polynomial filter when (somewhat) more is known about $L$. We restrict our investigation to a two-step alternating scheme:
\begin{equation}\label{eq:ourman}
x(k+2)=(I\text{-}w_2 L) (I\text{-}w_1 L)\, x(k) \; ,
\end{equation}
performing second-order polynomial filtering on the $L$-spectrum (or equivalently on the $P$-spectrum). 

Note that acceleration schemes have also been proposed for Markov chains (see e.g.~\cite{chen1999} and related work). However, the most prominent schemes explicitly build on full knowledge of a given $\mathcal{G}$, which differs from our setting.

The remainder of the paper is organized as follows. In Section \ref{sec:considerations} we compute the optimal second-order polynomial and the associated gain in convergence speed. We also mention some special graphs that allow 2-step consensus with possible symmetry-breaking. In Section \ref{sec:preconditioning} we consider the optimization of graph weights towards polynomial filtering. Section \ref{sec:robustness} discusses the implementation of the acceleration scheme and its robustness properties.
  
  
\section{Optimizing the polynomial filter} \label{sec:considerations}

We now investigate the form of the optimal second-order polynomial filter \eqref{eq:ourman} for a given graph $\mathcal{G}$, about which we possibly know more than just the spectral gap. Before the rescaling of $L$ in \eqref{eq:ourman}, we can assume without loss of generality that $P := I-L$ is centered, i.e.~$-\mu(P) x^T x \leq x^T P x \leq \mu(P) x^T x$ for all $x$ with $\sum_{i=1}^N x_i = 0$, with each equality achieved for some $x$, and where $\mu(P)$ is the SLEM of $P$. In this situation, the optimal polynomial filter only depends on $P$'s SLEM and \emph{Smallest Eigenvalue Modulus} $\sigma(P)$ (SEM).

Explicitly, rewrite \eqref{eq:ourman} as
\begin{equation}\label{eq:ourman2}
x(k+2)=\frac{(P-z_1)(P-z_2)}{(1-z_1)(1-z_2)}\, x(k) =: p_2(P)\, x(k)
\end{equation}
and denote by $\{ \lambda_i : i=1,2,...,N \}$ the eigenvalues of $P$, with $\lambda_1 = 1$ corresponding to the consensus eigenspace. Note that the filter $p_2$ can be any second-order polynomial restricted to $p_2(1)=1$. We define the optimal exponential convergence rate as:
\begin{equation*}
\mu_2(P) \stackrel{\vartriangle}{=} \min_{z_1,z_2} \left[ \max_{i>1} |p_2(\lambda_i)| \right]\, .
\end{equation*}

\begin{theorem}\label{th:main}
Consider a connected, undirected graph $\Gr$ with given (centered) weight matrix $P$. The optimal convergence rate attainable through $p_2$-acceleration \eqref{eq:ourman2} is given by
	\begin{equation} \label{eq:mu2}
	  \mu_2(P) = \frac{\mu(P)^2-\sigma(P)^2}{2-\mu(P)^2-\sigma(P)^2}
	\end{equation}
and obtained with the unique polynomial
	\begin{equation}\label{eq:pe2}
	  p_2(P)=\frac{P^2-\mu(P)^2/2-\sigma(P)^2/2}{1-\mu(P)^2/2-\sigma(P)^2/2}.	
	\end{equation}      
\end{theorem}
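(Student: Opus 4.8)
The plan is to reduce the two-variable minimax problem over $(z_1,z_2)$ to a one-variable problem by a symmetry/parametrization argument. Write the monic quadratic $q(\lambda)=(\lambda-z_1)(\lambda-z_2)=\lambda^2-s\lambda+r$, so that $p_2(\lambda)=q(\lambda)/q(1)$ with $q(1)=1-s+r\neq0$. Since $P$ is centered, the relevant spectrum of $P$ restricted to the non-consensus subspace lives in $S=[-\mu,-\sigma]\cup[\sigma,\mu]$ (writing $\mu=\mu(P)$, $\sigma=\sigma(P)$), and both endpoints $\pm\mu$ are attained; whether $\pm\sigma$ are attained or the spectrum fills the gaps does not matter for the worst case, which is controlled by $\max_{\lambda\in S}|q(\lambda)|/|q(1)|$. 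The first key observation is that, because $q(\lambda)$ depends on $\lambda$ only through $\lambda$ and $\lambda^2$, and $S$ is symmetric under $\lambda\mapsto-\lambda$, the worst case over $S$ is governed by $q$ evaluated on both signs; this suggests that the optimal $q$ should be even, i.e.\ $s=0$. I would justify this by an exchange argument: given any $q$, compare it with its "symmetrization," or more directly argue that a nonzero linear term only increases $\max_{\lambda\in S}|q(\lambda)|$ relative to the normalization $|q(1)|$.

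With $s=0$ the problem becomes: minimize over $r$ the quantity $\max_{t\in[\sigma^2,\mu^2]}|t-r|/|1-r|$, where $t=\lambda^2$. This is now the classical one-dimensional Chebyshev-type balancing problem: the numerator $\max_{t\in[\sigma^2,\mu^2]}|t-r|$ equals $\max(|\sigma^2-r|,|\mu^2-r|)$, minimized in the ratio sense by centering $r$ at the midpoint $r=(\mu^2+\sigma^2)/2$, giving numerator $(\mu^2-\sigma^2)/2$ and denominator $1-r=1-(\mu^2+\sigma^2)/2=(2-\mu^2-\sigma^2)/2$. Their ratio is exactly \eqref{eq:mu2}, and substituting $r$ back into $p_2(\lambda)=(\lambda^2-r)/(1-r)$ yields \eqref{eq:pe2}. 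I should double-check the sign/denominator: connectedness gives $\mu<1$ hence $\mu^2+\sigma^2<2$, so $1-r>0$ and the expression is a genuine positive rate less than $1$; also one must check the optimizer is strictly better than any $r$ off-center, which is immediate since moving $r$ increases one of the two endpoint deviations faster than it can help.

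The main obstacle is making the reduction to even polynomials ($s=0$) fully rigorous rather than heuristic. One clean way: parametrize candidates by the pair of values $q$ takes, or note that for fixed $\max_{S}|q|=\rho$ we want to maximize $|q(1)|$; since $1\notin S$ and $1>\mu$, and $q$ is quadratic, one can use that the extremal $q$ for such problems equioscillates, and a quadratic equioscillating to $\pm\rho$ at three points of $S$ — which, given the symmetry of $S$ and the fact that only the two outer endpoints $\pm\mu$ are forced to be attained — must touch $+\rho$ or $-\rho$ at $\mu$, at $-\mu$, and at one interior point; symmetry of the constraint set then forces the interior point to be where $q'=0$, i.e.\ at $\lambda=s/2$, and matching the values at $\pm\mu$ forces $s=0$. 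Alternatively, and perhaps more transparently for the paper, I would substitute $t=\lambda^2\in[\sigma^2,\mu^2]$ from the start, observe $p_2(\lambda)=\alpha\lambda^2+\beta\lambda+\gamma$, and argue that for the worst case over the sign-symmetric set $S$ the odd part $\beta\lambda$ can only hurt: for any $\lambda\in S$, $\max(|q(\lambda)|,|q(-\lambda)|)\ge|q(\lambda)+q(-\lambda)|/2=|\lambda^2-r|$ with equality when $s=0$, while the normalization $q(1)$ is unaffected by replacing $q$ with its even part only if we also adjust — so instead compare $q$ against $\tilde q(\lambda)=\tfrac12(q(\lambda)+q(-\lambda))$, which has $\tilde q(1)=\tfrac12(q(1)+q(-1))$; this needs the extra step $|q(-1)|\le\max_S|q|\cdot$(something), so the cleanest rigorous route really is the equioscillation argument above. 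I would therefore lead with equioscillation/uniqueness of the degree-$2$ Chebyshev problem on $\{-\mu,-\sigma\}\cup\{\sigma,\mu\}$ with the interpolation constraint at $1$, extract $s=0$, and finish with the elementary one-variable optimization.
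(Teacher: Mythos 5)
Your proof is correct and follows essentially the same route as the paper's: both reduce to the even quadratic that equioscillates between $\pm\mu_2$ at moduli $\sigma$ and $\mu$, with your substitution $t=\lambda^2$ and midpoint choice $r=(\mu^2+\sigma^2)/2$ being exactly the paper's condition $p_2(\mu)=p_2(-\mu)=-p_2(\sigma)=-p_2(-\sigma)$ together with the observation that $|p_2|$ is maximized at the endpoints of $[\sigma^2,\mu^2]$, so interior eigenvalues are irrelevant. The evenness step you worry about is asserted rather than proved in the paper too (via the stated equioscillation characterization), so your level of rigor matches the original.
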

\begin{proof}
For any $[|a|,|b|] \subseteq [0,1]$, the second-order polynomial $p_2(x)$ for which $p_2(1)=1$ and which minimizes $\max_{x \in \{a,b,-b\}}\left| p_2(x) \right|$ is determined by $\, p_2(b) = p_2(-b) = -p_2(a) = -p_2(-a) \,$. Replacing $|a|,|b|$ by particular eigenvalues $\sigma(P),\mu(P)$, this polynomial bounds the best possible performance. Moreover, for this same polynomial we have $|p_2(x)| \leq |p_2(b)|$ for any $|x| \in [a,b]$, so it actually gives the best performance, independent of the other eigenvalues of $P$.
\end{proof}

The convergence rate $\mu_2(P)$ must be compared with $\mu(P)^2$, the convergence over \emph{two steps} of the standard consensus algorithm \eqref{eq:stdcons}. The resulting improvement by $p_2$-acceleration is illustrated on 
Figure \ref{fig:p2}.

\begin{figure}[htb]
\centering
\includegraphics[width=0.75\columnwidth]{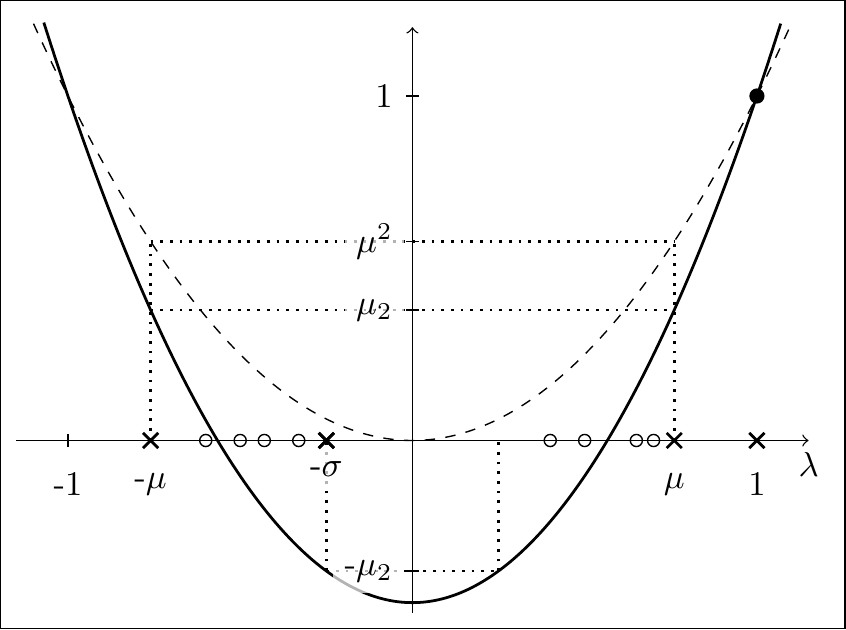}\vspace{2mm}
\caption{Comparison of SLEM after 2 steps of standard consensus ($\mu^2$) and convergence rate under optimal $p_2$-acceleration ($\mu_2$), for some arbitrary $P$. The critical eigenvalues, determining the convergence rate, are distinguished ($\times$) from the other ones ($\circ$). The polynomials $y = \lambda^2$ (dotted lines, standard consensus) and $y=p_2(\lambda)$ (plain line, polynomial filter) illustrate the acceleration mechanism.}
\label{fig:p2}
\end{figure} 


\subsection{Favorable graphs and symmetry breaking} \label{sec:favorablegraphs}

In light of the previous result, a given weighted  graph $\Gr$ will allow substantial acceleration through second-order filtering if its (e.g.~Laplacian) eigenvalues are clustered in two sets whose width is small compared to the distance between the sets. This situation easily generalizes to the case of $n$ clusters for $n$-order polynomial filters, at least conceptually; explicit forms generalizing \eqref{eq:mu2},\eqref{eq:pe2} may be more difficult to obtain.

This observation motivates a different way of optimizing network links, if some design freedom is available. Before turning to an investigation of such optimization, we briefly consider the special cases where \emph{finite-time convergence to consensus is achieved} in two steps.\\

The study of finite-time consensus has its own line of work, to which the present results connect through \cite{hendrickx2014,sandryhaila2014} and the like. Obviously, if a $P$ matrix in our framework has only two distinct eigenvalues $\lambda_2=-\lambda_3$ besides the invariant space with $\lambda_1=1$, then $\sigma(P) = \mu(P)$ and \eqref{eq:mu2},\eqref{eq:pe2} imply perfect consensus after one application of \eqref{eq:ourman}. Graphs $\Gr$ for which $P_1$, $P_2$ can be selected such that $P_1 \, P_2$ features deadbeat convergence are characterized in \cite{hendrickx2014}. The polynomial filter is restricted to $P_2 = cI + d P_1$, with $c,d \in \mathbb{R}$. It is presently unclear how much more restrictive this is, but the following results nevertheless carry over from \cite{hendrickx2014}.

\begin{proposition}\label{prop2}
$\bullet$ If 2-step convergence can be reached on a graph $\Gr$ then its diameter $d(\Gr)\leq 2$.\newline
$\bullet$ $d(\Gr)\leq 2$ is not sufficient for 2-step convergence.\newline
$\bullet$ If in addition $\Gr$ is distance-regular, then 2-step convergence can be attained.
\end{proposition}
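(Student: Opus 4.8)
The three bullets are logically independent claims, so I would handle them separately, drawing on the characterization of deadbeat-convergent products $P_1 P_2$ from \cite{hendrickx2014} as permitted by the paper. For the first bullet, I would argue by contrapositive: suppose $d(\Gr) \geq 3$, so there exist nodes $u,v$ at graph distance at least $3$. Any single consensus step $P_i = I - w_i L$ has the same sparsity pattern as $I+L$, hence $[P_2 P_1]_{uv}$ is a sum over common neighbors of $u$ and $v$ in the respective supports; since $u$ and $v$ share no node within distance $1$ of both, $[P_2 P_1]_{uv} = 0$. But a rank-one averaging matrix $\frac{1}{N}\mathbf{1}\mathbf{1}^T$ (or more generally the projector onto the consensus line, which is what 2-step convergence to \emph{average} consensus forces, and onto a weighted variant otherwise) has all entries nonzero, a contradiction. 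I should be slightly careful about whether "2-step convergence" means convergence to the exact average or merely to some consensus value; in either case the limiting matrix $P_2 P_1$ must have a fully supported row space spanned by $\mathbf{1}$, forcing every entry nonzero, so the argument goes through.

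For the second bullet I only need a single counterexample: a graph with $d(\Gr) \leq 2$ on which no choice of $w_1, w_2$ yields $P_2 P_1$ equal to a consensus projector. The cleanest route is to reuse an explicit example from \cite{hendrickx2014}; a natural candidate is a small graph such as a path on $4$ vertices is excluded (diameter $3$), so instead something like the "paw" graph or a star with an extra edge, or more systematically: pick a graph whose Laplacian spectrum has at least three distinct nonzero eigenvalues that cannot be collapsed to a $\pm$ pair by the restricted filter $P_2 = cI + dP_1$. Since $P_2 P_1$ is then a degree-$2$ polynomial in $L$ with at most $3$ distinct eigenvalues on the orthogonal complement of $\mathbf{1}$, and a consensus projector needs all of them to be $0$, a graph with $4$ or more distinct nonzero Laplacian eigenvalues of diameter $2$ suffices. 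The complete bipartite graph $K_{2,3}$ minus a perfect-matching-like perturbation, or a suitable small graph, should do; I would just cite the concrete instance in \cite{hendrickx2014} rather than rederive it.

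For the third bullet, the key structural fact is that a distance-regular graph of diameter $D$ has exactly $D+1$ distinct adjacency (equivalently Laplacian) eigenvalues, and combined with bullet one, $d(\Gr)\leq 2$ gives $D \in \{1,2\}$, hence at most $3$ distinct Laplacian eigenvalues, i.e.~at most two nonzero ones, say $\lambda_2 \neq \lambda_3$ (the case of one nonzero eigenvalue is the complete graph, trivially 1-step). Then I would exhibit $w_1, w_2$ explicitly so that $(I - w_2 L)(I - w_1 L)$ annihilates both eigenspaces: choose $w_1 = 1/\lambda_2$ and $w_2 = 1/\lambda_3$, so that the product kills the $\lambda_2$-eigenspace at the first factor and the $\lambda_3$-eigenspace at the second, while acting as identity on $\mathbf{1}$. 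This realizes deadbeat convergence in two steps, and one checks it is of the admissible form. The main obstacle, and the step I would spend the most care on, is the second bullet: producing a genuinely correct small counterexample and verifying that \emph{no} admissible $(w_1,w_2)$ works requires either a clean spectral obstruction (the distinct-eigenvalue count argument above) or a faithful appeal to the classification in \cite{hendrickx2014}; I would lean on the latter to keep the argument short, noting that the paper explicitly states these facts "carry over from \cite{hendrickx2014}."
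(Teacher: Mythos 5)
The paper does not actually prove Proposition~\ref{prop2}: it imports all three bullets directly from \cite{hendrickx2014} (``the following results nevertheless carry over from \cite{hendrickx2014}''). Your proposal is therefore more self-contained than the paper on two of the three points, and those two are correct. For the first bullet, the sparsity argument is sound: since $L\mathbf{1}=0$ and both factors are symmetric and fix $\mathbf{1}$, deadbeat convergence forces $P_2P_1=\tfrac{1}{N}\mathbf{1}\mathbf{1}^T$, whose entries are all nonzero, while $[P_2P_1]_{uv}=0$ whenever $d(u,v)\geq 3$. For the third bullet, a diameter-$2$ distance-regular (i.e.\ strongly regular) graph has exactly three distinct adjacency eigenvalues, hence two distinct nonzero Laplacian eigenvalues, and $w_1=1/\lambda_2$, $w_2=1/\lambda_3$ gives deadbeat convergence; this is exactly the mechanism the paper alludes to via $\sigma(P)=\mu(P)$.

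The genuine gap is in the second bullet, and it is twofold. First, your spectral obstruction (``a diameter-$2$ graph with $4$ or more distinct nonzero Laplacian eigenvalues suffices'') applies to a \emph{fixed} Laplacian, whereas the claim quantifies over all admissible edge weights: you must show that \emph{no} weighting of the given sparsity pattern collapses the nonzero spectrum to two values $\pm z$ after the affine shift. The eigenvalue count of one particular weighting proves nothing, precisely because the paper's own $\Kg(2,m)$ example shows that breaking symmetry in the weights can merge three distinct eigenvalues into two. Second, your concrete candidates are wrong for the same reason: the paper states that $d(\Gr)\leq 2$ \emph{is} sufficient for all graphs with $N\leq 5$, so $K_{2,3}$, the paw graph, a star with an extra edge, and every other five-or-fewer-node candidate you name cannot work; any counterexample needs at least six nodes together with an argument ruling out every weighting. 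Falling back on the classification in \cite{hendrickx2014} is the honest fix --- and is what the paper itself does --- but as written your sketch would lead a reader to a counterexample that the paper explicitly refutes two paragraphs later.
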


The condition $d(\Gr)\leq 2$ turns out to be sufficient for $N\leq 5$ at least, showing that $5$ out of the $6$ four-node graphs (resp.~15 ouf of 21 five-node graphs) converge in two steps with $p_2$-acceleration; while for \eqref{eq:stdcons} only the complete graph converges in finite time. Lists of graphs with only two different nonzero eigenvalues can be found in the literature, see e.g.~\cite[Table 14.2, Table 14.4, Chapter 15.2]{brouwer2011} assuming uniform weights.

Examining the complete bipartite graphs $K(\ell,m)$ between sets of $\ell$ and $m$ nodes allows to illustrate interesting finite-time convergence properties. Both $\Kg(m,m)$ and the star graph $\Kg(1,m)$ allow finite-time convergence using $p_2$ with uniform link weights, although only $\Kg(m,m)$ is distance-regular. For the other $\Kg(\ell,m)$ cases, uniform weights lead to 3 distinct nonzero eigenvalues. Regarding $\mu^2$ (FSSC) this choice is optimal, see~\cite{boyd2004,boyd2009}. Regarding $\mu_2$, a nonuniform weight selection might further accelerate the convergence.
In particular, a \emph{symmetry breaking} on $\Kg(2,m)$ does allow finite-time convergence. Indeed, with asymmetric weights $p\neq q$ (see figure \ref{fig:K2mtadpole}), the nonzero eigenvalues are proportional to:
      \begin{equation*}
       p+q,\;\; \tfrac{(m+1)(p+q)}{2} 
	\pm\tfrac{\sqrt{(m+1)^2(p+q)^2-4pqm(m+2)}}{2}.
      \end{equation*}
By choosing $q=\frac{1}{2}(m\pm\sqrt{m^2-4})p$, this set reduces to two distinct values and $\mu_2(P)=0$. \emph{This possible benefit of symmetry breaking contrasts with standard consensus \eqref{eq:stdcons}, for which \cite{boyd2009} shows that keeping the edge-transitivity symmetry in the weights does lead to the FSSC.}
    \begin{figure}[ht]
      \centering
      \includegraphics[width=0.55\columnwidth]{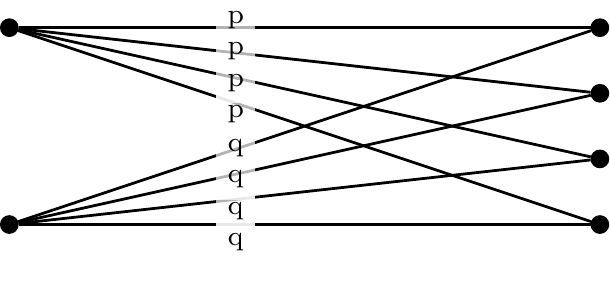}
     \caption{The complete bipartite graph $\Kg(2,4)$ with weights $p$ and $q$.}
     \label{fig:K2mtadpole}
    \end{figure}
 
Optimization of the edge weights towards $p_2$-acceleration, for a given graph structure, is further considered in the following section.   


\section{Preconditioning graphs: spectral clustering} \label{sec:preconditioning}

  Theorem \ref{th:main} describes the convergence rate obtained with an optimal polynomial filter for a given weighted graph (spectrum). This section treats the optimization of the edge weights in compliance with the graph connectivity, which we call \emph{preconditioning} the graph towards $p_2$-acceleration. This is in the same spirit as the FSSC. However, optimal \emph{polynomial} acceleration favors an eigenvalue spectrum in distinct \emph{clusters}. A positive point is that Theorem \ref{th:main} gives an explicit expression for the optimal polynomial, so we can efficiently concentrate on optimizing $P$. For a given $\mathcal{G}(\mathcal{V},\mathcal{E})$, we denote by $P^*$ the FSSC i.e.~the optimal choice of weights for \eqref{eq:stdcons}, by $P^*_{p2}$ the optimal weights for $p_2$ acceleration.

\subsection{Bound on fastest convergence rate}

  We first get a bound on the possible acceleration. Given $\Gr$, denote $\Gr^2$ its \emph{square graph}, in which two nodes are linked by an edge iff in $\Gr$ they are linked by a path of $\leq 2$ edges. We denote an arbitrary weight matrix on this power graph by $P_{\Gr^2}$.
    \begin{proposition}\label{lowerbndr}
      The fastest convergence rate attainable on $\Gr$ using $p_2$-acceleration is bounded by the SLEM of the Fastest Single-Step Consensus on $\Gr^2$, i.e.~
      $$\mu_2(P^*_{p_2}) \geq \mu(P^*_{\Gr^2}) \; $$
    \end{proposition}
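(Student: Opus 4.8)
The plan is to show that any single-step consensus matrix on $\Gr^2$ can be realized (up to the relevant spectral data) as a product of two single-step consensus matrices on $\Gr$, so that the $p_2$-acceleration scheme on $\Gr$ is at least as powerful as the FSSC on $\Gr^2$. The key structural observation is that if $L$ is a Laplacian supported on the edges of $\Gr$, then $L^2$ is a matrix whose off-diagonal support lies within the edges of $\Gr^2$ (a nonzero entry $(L^2)_{ij}$ requires a common neighbor of $i$ and $j$, i.e.\ a path of length $\leq 2$). More generally, for weights $w_1, w_2$, the product $(I-w_1 L_1)(I-w_2 L_2)$ with $L_1, L_2$ supported on $\Gr$ expands to $I - (w_1 L_1 + w_2 L_2) + w_1 w_2 L_1 L_2$, and $L_1 L_2$ is again supported on $\Gr^2$. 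Hence every matrix reachable by $p_2$-acceleration on $\Gr$ is a valid (symmetric, consensus-preserving, $\Gr^2$-supported) single-step matrix on $\Gr^2$.

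Concretely, I would argue as follows. First, fix the optimal $p_2$-accelerated scheme on $\Gr$, with matrices $P_1 = I - w_1 L$, $P_2 = I - w_2 L$ chosen per Theorem~\ref{th:main} (note both use the same underlying weighted Laplacian $L$, rescaled), achieving rate $\mu_2(P^*_{p_2})$. Second, set $Q := P_2 P_1$. Since $P_1, P_2$ are symmetric, commute (being polynomials in the same $L$), and fix the all-ones vector, $Q$ is symmetric, doubly stochastic in the consensus sense ($Q\mathbf{1}=\mathbf{1}$), and its off-diagonal support is contained in $\mathcal{E}(\Gr^2)$. Third, observe $\mu(Q) = \mu_2(P^*_{p_2})$: the eigenvalues of $Q$ on the orthogonal complement of $\mathbf{1}$ are exactly $p_2(\lambda_i)$, whose maximum modulus is by definition $\mu_2$. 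Fourth, since $Q$ is an admissible single-step consensus matrix on $\Gr^2$ and $P^*_{\Gr^2}$ is by definition the one minimizing the SLEM over all such matrices, we get $\mu(P^*_{\Gr^2}) \leq \mu(Q) = \mu_2(P^*_{p_2})$, which is the claim.

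The main subtlety — and the step I would be most careful about — is verifying that $Q = P_2 P_1$ really is an \emph{admissible} weight matrix in the sense used to define the FSSC on $\Gr^2$. Admissibility requires symmetry and $Q\mathbf{1} = \mathbf{1}$ (both clear here) and, crucially, that $Q$'s nonzero off-diagonal pattern respects $\Gr^2$: this is exactly the support computation above, but one should double-check the degenerate possibility that a desired $Q$ on $\Gr^2$ is \emph{not} of product form — this does not matter for the inequality direction we want, since we only need product-form matrices to lie \emph{inside} the FSSC-feasible set on $\Gr^2$, not the converse. Note the inequality can be strict precisely because $Q$ is forced to be a product of two matrices sharing the \emph{same} Laplacian $L$, which is more restrictive than an arbitrary $\Gr^2$-supported matrix; this is consistent with the ``$\geq$'' rather than ``$=$'' in the statement. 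One should also confirm that the trivial eigenvalue is handled correctly — i.e.\ that $\mathbf{1}$ is the relevant invariant direction for both problems and the SLEM is measured on its complement — but this is immediate from the centered/connected assumptions.
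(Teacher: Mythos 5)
Your argument is correct and is essentially the paper's own proof: both rest on the observation that every matrix of the form $p_2(P)$ with $P$ supported on $\Gr$ is symmetric, fixes $\mathbf{1}$, and has off-diagonal support contained in $\mathcal{E}(\Gr^2)$, hence lies in the feasible set of the FSSC problem on $\Gr^2$, so the minimum over the smaller (polynomially constrained) set cannot beat the minimum over the larger one. You merely spell out the support computation and the identification $\mu(Q)=\mu_2(P^*_{p_2})$ that the paper leaves implicit.
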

    \begin{proof}
      For $p_2(P)$ we effectively tune the same edge weights as for $P_{\Gr^2}$, except that in the latter the individual weights are unconstrained while in the former they are interdependent through the polynomial structure.
    \end{proof}
  This bound is not always tight: if the diameter $d(\Gr)\leq 2$ then $\Gr^2$ is completely connected and $\mu(P^*_{\Gr^2})=0$. The second item of Proposition \ref{prop2} however states that this does not imply $\mu_2(P^*_{p_2}) = 0$.
   
\subsection{Numerical optimization} \label{sec:conv}

The FSSC choice of $P^*$ does not necessarily minimize $\mu_2$, see e.g.~the $\mathcal{K}(2,m)$-graph (section \ref{sec:favorablegraphs}); an alternative optimization is required.\\
Expressed as a function of the individual edge weights, $\mu_2(P)$ as defined in Theorem \ref{th:main} is not a convex function, unlike $\mu(P)$ for the FSSC. Given the convexity of $\mu(\cdot)$ as a function of its argument and inspired by Proposition \ref{lowerbndr}, we try to reformulate the problem as minimizing $\mu(P_{\Gr^2})$ over all $P_{\Gr^2}$ which can be represented as $p_2(P)$. The resulting relaxation however is not expected to improve convexity as the set of all $p_2(P)$ is a non-convex subset of the set of all $P_{\Gr^2}$ (see Proposition \ref{prop:nonconv} in appendix). So to have a convex problem we indeed need to relax the subspace to all $P_{\Gr^2}$. The solution of the relaxed problem would then have to be reprojected into the set of all $p_2(P)$. And for this, it is well-known that variations on matrix elements give little insight on the induced variations in eigenvalues.

Thus optimal preconditioning towards $p_2$-acceleration seems significantly more difficult numerically than FSSC, unless a better reformulation is found. In the meantime, we have performed a numerical optimization by \emph{gradient descent}, minimizing the explicit function $\mu_2$ as a function of the edge weights. We claim by no means that this is the best strategy in terms of complexity or results, at this point it is just a feasible one to evaluate the potential of preconditioning for $p_2$-acceleration. As the used formulation is non-convex, the method converges to a local minimum, not guaranteeing the global optimum. The bound $\mu(\Gr^2)$ from Proposition \ref{lowerbndr} can give an indication on the quality of the obtained $P$ matrix.
  
Figure \ref{fig:egvclust} shows the spectrum of an Erd\H{o}s-R\'enyi graph with 20 nodes, 96 edges and diameter 2. Its weights have been optimized respectively for FSSC ($P^*$) and for acceleration preconditioning ($P^*_{p2}$, hopefully). The difference is graphically striking as the preconditioning brings $\mu_2(P^*) = 0.0626$ down to $\mu_2(P^*_{p2}) = 0.0088$. This is further to be compared to $\mu^2(P^*) = 0.1181$ for the FSSC without $p_2$ acceleration and to the lower bound $\mu(P^*_{\Gr^2}) = 0$.
    
   \begin{figure}
      \centering
      \includegraphics[width=0.8\columnwidth,clip=true,trim = 6.05cm 11.6cm 6.4cm 11.6cm]{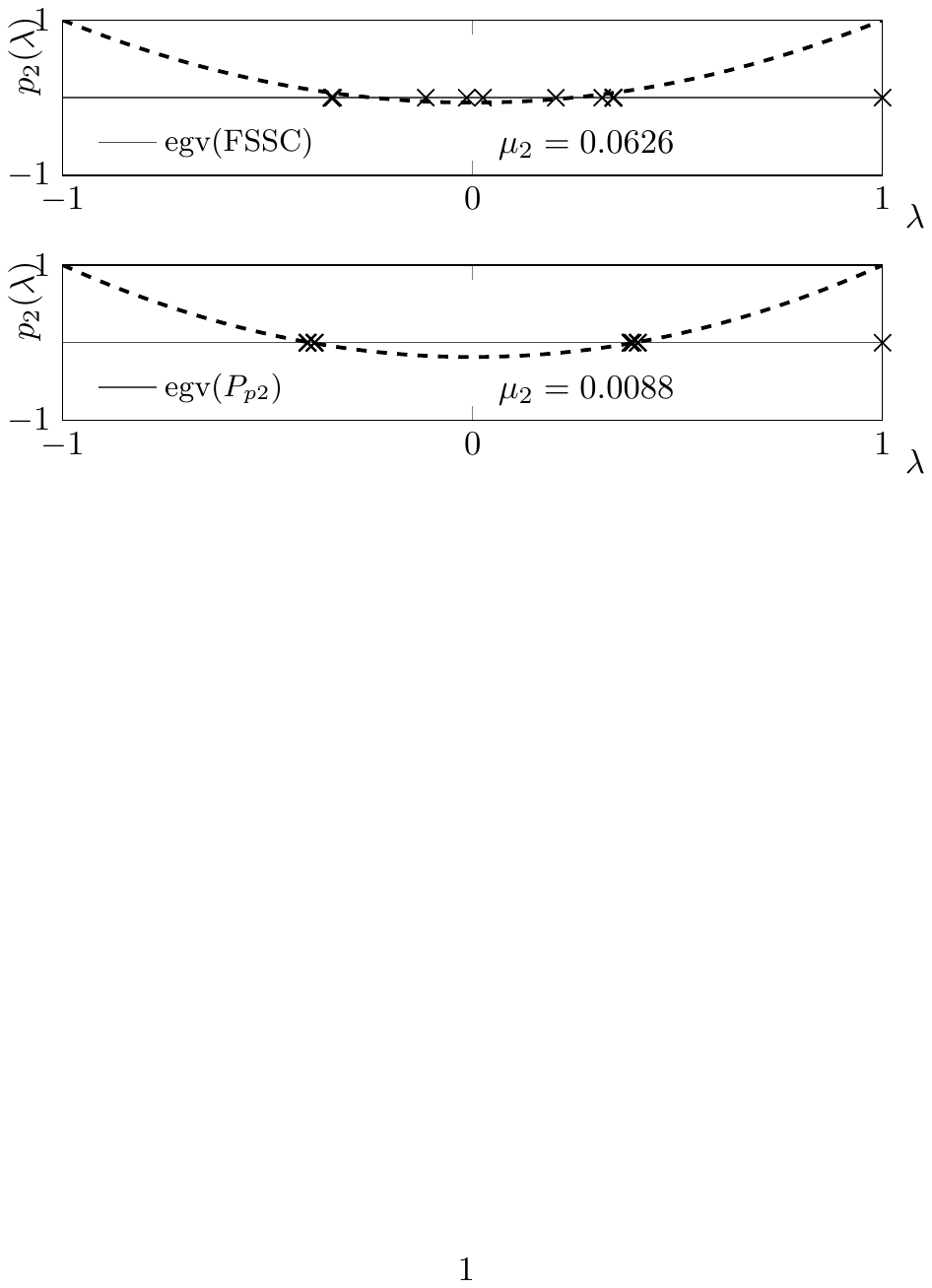}  
      \caption{Spectral clustering performed on an Erd\H{o}s-R\'enyi graph $\mathcal{G}(20,9)$. The plot shows the spectrum of both the FSSC $P^*$ and the optimized $P_{p2}$, as well as their optimal $p_2$-polynomials and corresponding $\mu_2$ values.}
      \label{fig:egvclust}   
   \end{figure}  
   
Beneficial clustering of the eigenvalues around two polynomial zeros $\pm z$ as on Fig.~\ref{fig:egvclust} is not always possible. It typically deteriorates with decreasing number of edges per node, thus as the amount of degrees of freedom available for optimizing a constant number of eigenvalues decreases. That this is not a general rule follows from the perfect clustering of star graphs (section \ref{sec:favorablegraphs}). 
We observe the following behavior in simulations.  
\begin{itemize}    
    \item The behavior of the preconditioner and acceleration was examined on a large set of Erd\H{o}s-R\'enyi graphs. A clear trend appears when examining the average results as a function of the density, i.e. the number of edges compared to the complete graph, as shown in figure \ref{fig:clustdens}. When less than about 30\% of the edges are present w.r.t.~the complete graph, the most significant acceleration is obtained by just taking the graph optimized for the FSSC, which is easily computable, and applying the optimal polynomial filter to it instead of first-order consensus; trials to further adjust the weights towards faster convergence do not really pay off (in average). For densities higher than about 30~\%, the preconditioner starts to significantly pay off, with gains up to orders of magnitude.    
    \item This behavior and the previously mentioned lower bound of Proposition \ref{lowerbndr} might suggest that a sort of phase transition should appear as a function of the \emph{diameter} of the graph. To investigate this we have partitioned the set of Erd\H{o}s-R\'enyi graphs of a given density, as a function of their diameter. Except for the graphs with finite-time convergence at diameter 2, the acceleration ratio achieved by the preconditioner with respect to the FSSC appears to have \emph{the same} distribution on these subsets. Moreover, the lower bound of Proposition \ref{lowerbndr} was reached on several graphs of diameter 3,4 and 5. This seems to indicate that the graph diameter is not a limiting factor for the preconditioner. Surprisingly, it appears that the role of the graph diameter in possible convergence speed is an open question in the literature also for the FSSC/FMMC problem.    
    \item The optimal $P$ matrix sometimes contains negative elements, also on off-diagonal entries. This indicates that \emph{repulsion} between certain nodes can accelerate consensus with $p_2$. Such negative entries however can be undesirable for robustness, in which case one can easily exclude them in the optimization process.   
    \item Unsurprisingly, a spectrum well clustered with the preconditioner is often highly degenerate in $\pm \mu$ and $\pm \sigma$. This is similar to the degeneracy found at $\pm \mu$ in the spectrum of the FSSC.
  \end{itemize}
  
    \begin{figure}
      \centering
      \includegraphics[width=0.7\columnwidth]{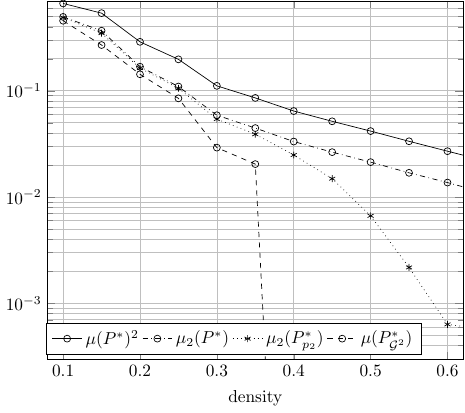}  
      \caption{Performance of clustering on an Erd\H{o}s-R\'enyi graph with increasing density. Depicted are $\mu(P^*)^2$ using the FSSC during two steps, its $p_2$-acceleration $\mu_2(P^*)$, the $p_2$-acceleration with optimized graph $\mu_2(P^*_{p2})$ (hopefully), and the lower bound $\mu(P^*_{\Gr^2})$.}
      \label{fig:clustdens}   
    \end{figure}


\section{Robustness} \label{sec:robustness}
    
  
\subsection{Implementation in alternating steps} \label{sec:implementation}
 
Implementation of the acceleration scheme \eqref{eq:pe2} requires to sequentially apply two matrices $P_-$ and $P_+$, such that 
    \begin{align*}
      P_-P_+ &= \dfrac{P^2-z^2}{1-z^2}, & \text{with } z^2 &= \dfrac{\mu(P)^2+\sigma(P)^2}{2}.
    \end{align*}     
A linear implementation \eqref{eq:ourman} compatible with one-step communication links for fixed $P$, necessarily takes the form
\begin{align} \label{eq:impl}
P_- &= a\frac{P-z}{1-z}, & P_+ &= \frac{1}{a}\frac{P+z}{1+z}, & a&\in\real. 
\end{align}
The effect of $P_-$ and $P_+$ on a given eigenvector is obtained just by replacing $P$ in \eqref{eq:impl} by the corresponding eigenvalue. We immediately see that the eigenvalue $1$ of $P$, corresponding to the consensus eigenvector, will be multiplied alternatively by $a$ and by $1/a$. In this sense $a>1$ (or $1/a>1$) thus implies an ``unstable'' step with $P_-$ (or with $P_+$) for the consensus eigenvector. The consensus value is kept \emph{at each step} only if $a=1$, else it is recovered every second step. However, taking $a\neq 1$ can be interesting regarding the remaining eigenvalues of $P$. We say that the iteration towards consensus involves stable steps if both $\mu(P_+) < 1$ and $\mu(P_-) < 1$, i.e.~they are stable on the eigenspace orthogonal to the consensus eigenvector associated to the trivial eigenvalue $1$ of $P$.

\begin{proposition}\label{prop:impl}
For given $z$: 
    $\bullet$ The matrices $P_+$ and $P_-$ with $a=1$ are both stable for consensus if and only if $\mu(P) < 1-2z$.
\newline
    $\bullet$ The iteration towards consensus involves stable steps for a proper choice of $a\neq 1$ if and only if $\mu(P) < 1-z$.
\newline
For given $\mu(P)$ it is always possible to restrict $z$ such that these stability conditions are satisfied, possibly by taking it smaller than the optimal acceleration value $z^2 = (\mu(P)^2 + \sigma(P)^2)/2$.
\end{proposition}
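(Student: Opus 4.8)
The plan is to reduce the two stability requirements $\mu(P_-)<1$ and $\mu(P_+)<1$ to closed-form inequalities in the free scalar $a$, exploiting the centered normalisation of $P$. Since $P$ is centered, its spectrum on the consensus-orthogonal subspace attains the endpoints $\pm\mu(P)$ and lies in $[-\mu(P),\mu(P)]$, and the filters rescale each eigenvalue $\lambda$ by the affine maps $\lambda\mapsto\frac{\lambda-z}{1-z}$ (times $a$, for $P_-$) and $\lambda\mapsto\frac{\lambda+z}{1+z}$ (times $1/a$, for $P_+$). Taking $0<z<1$ and $a>0$ (only $|a|$ enters the SLEMs), the functions $\lambda\mapsto|\lambda-z|/(1-z)$ and $\lambda\mapsto|\lambda+z|/(1+z)$ are convex in $\lambda$, hence maximised over the spectrum at an extreme eigenvalue; since $z>0$ makes $\mu(P)+z$ exceed $|\mu(P)-z|$, the maximiser is $\lambda=-\mu(P)$ for $P_-$ and $\lambda=+\mu(P)$ for $P_+$, so that
\[
\mu(P_-)=a\,\frac{\mu(P)+z}{1-z},\qquad \mu(P_+)=\frac{1}{a}\,\frac{\mu(P)+z}{1+z}.
\]

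For the first bullet I put $a=1$: then $\mu(P_+)<1$ is equivalent to $\mu(P)<1$, which holds whenever the consensus iteration converges at all, while $\mu(P_-)<1$ reads $\mu(P)+z<1-z$, i.e.\ $\mu(P)<1-2z$; since $1-2z<1$, the two together give the stated equivalence. For the second bullet the requirements become $a<\frac{1-z}{\mu(P)+z}$ and $a>\frac{\mu(P)+z}{1+z}$, so an admissible $a$ exists exactly when $\frac{\mu(P)+z}{1+z}<\frac{1-z}{\mu(P)+z}$, that is $(\mu(P)+z)^2<1-z^2$, the condition on $\mu(P)$ for the given $z$. I would add the small remark that whenever this strict inequality holds, the admissible set for $a$ is a nonempty open interval and therefore contains values $a\neq1$ (and if $\mu(P)\geq1-2z$ it excludes $a=1$ altogether), so restricting to $a\neq1$ in the statement is harmless.

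The third bullet then follows immediately: the thresholds $z<\tfrac12\bigl(1-\mu(P)\bigr)$ (for $a=1$) and the corresponding upper bound on $z$ extracted from $(\mu(P)+z)^2<1-z^2$ (for $a\neq1$) are both strictly positive precisely because $\mu(P)<1$, so both are satisfied by taking $z$ small enough — in particular smaller, if necessary, than the optimal acceleration value $z^2=(\mu(P)^2+\sigma(P)^2)/2$.

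The step that needs real care, and the one I expect to be the main obstacle, is establishing the two closed forms for $\mu(P_\pm)$: that the critical non-trivial eigenvalue under each affine filter is $\mp\mu(P)$, rather than $\pm\sigma(P)$ or an interior eigenvalue. This is exactly where the centered normalisation of $P$ is used, via the convexity argument above together with the sign of $z-\lambda$ on the spectral pieces $[\sigma(P),\mu(P)]$ and $[-\mu(P),-\sigma(P)]$ in the relevant range of $z$; once these formulas are in hand the remainder is routine algebra.
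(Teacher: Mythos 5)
Your approach is the same as the paper's: compute $\mu(P_\pm)$ in closed form as functions of $a$ using the centered normalisation (the critical eigenvalue being $-\mu(P)$ for $P_-$ and $+\mu(P)$ for $P_+$), then treat $a=1$ and the optimisation over $a$ separately. Your identification of the critical eigenvalues via convexity of the affine maps is correct, and in fact more carefully argued than in the paper, which simply invokes linearity; the first and third bullets go through.

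The problem is the second bullet. You correctly derive that an admissible $a$ exists iff $(\mu(P)+z)^2<1-z^2$, and then call this ``the condition on $\mu(P)$ for the given $z$'' as if it coincided with the stated condition $\mu(P)<1-z$. It does not: $(\mu(P)+z)^2<1-z^2$ is equivalent to $\mu(P)<\sqrt{1-z^2}-z$, which is strictly stronger than $\mu(P)<1-z$ for every $z\in(0,1)$. Concretely, $z=0.6$ and $\mu(P)=0.3$ satisfy $\mu(P)<1-z$ yet give $(\mu(P)+z)/\sqrt{1-z^2}=0.9/0.8>1$, so no choice of $a$ makes both steps stable. Only the ``only if'' direction of the bullet follows from your inequality, since $(\mu(P)+z)^2<1-z^2<1$ implies $\mu(P)+z<1$. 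For what it is worth, the paper's own proof computes the same minimised value $(\mu(P)+z)/\sqrt{1-z^2}$ (via the equalising choice $a=\sqrt{1-z}/\sqrt{1+z}$, which is equivalent to your interval-nonemptiness argument) and then asserts the same equivalence with $\mu(P)<1-z$ without justification; so you have reproduced the paper's reasoning faithfully, including the step that does not actually close. A complete treatment should either prove the missing ``if'' direction (which is false as stated) or explicitly record that the exact threshold is $\mu(P)<\sqrt{1-z^2}-z$.
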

\begin{proof}
Choosing $a=1$ we get, by linearity of both $P_+$ and $P_-$ in $P$ and hence in its eigenvalues: $\max\{|\mu(P_+)|,|\mu(P_-)|\} = (\mu(P)+z)/(1-z)$. The latter is smaller than 1 if and only if $\mu(P) < 1-2z$; while the eigenvalue $1$ of $P$ remains unchanged in $P_+$ and $P_-$ when $a=1$. Choosing $a=\sqrt{1-z}/\sqrt{1+z}$ minimizes $\max\{|\mu(P_+)|,|\mu(P_-)|\}$, at the value $(\mu(P)+z)/\sqrt{1-z^2}$. The latter is smaller than 1 if and only if $\mu(P)<1-z$.
\end{proof}

Figure \ref{fig:SLEMconv} illustrates the second case of Proposition \ref{prop:impl}. The stability of individual steps can be relevant if we cannot ensure that all intended steps will be applied, e.g.~due to synchronization issues. For standard consensus with $P$ this implies no problem, just skipped steps. But in accelerated consensus with $a=1$, if by chance $P_-$ is applied more frequently than $P_+$ in the situation of Fig.~\ref{fig:SLEMconv}, then one mode increases in an unstable way. We can prevent this risk of instability with $P_-^{st}$, which takes $a\neq 1$. In this case, more frequent applications of $P_-$ will just change the consensus \emph{value} to something (unstably) different from the average of initial values, but it will not prevent the agents from converging to consensus. The preferable tradeoff depends on the practical situation.
\begin{figure}[htb]
      \centering
      \includegraphics[width=0.9\columnwidth,clip=true,trim=5.4cm 19.7cm 4.4cm 4.2cm]{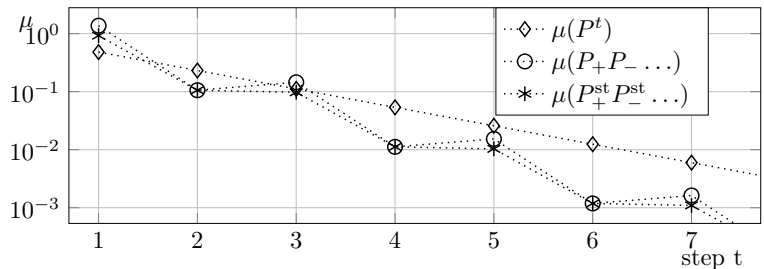}
      \caption{Convergence of the individual steps when using standard consensus ($P$), optimal $p_2$ acceleration with $a=1$ (periodic repetition of $P_+P_-\ldots$) and with $a=\sqrt{1-z}/\sqrt{1+z}$ (periodic repetition of $P_+^{\text{st}}P_-^{\text{st}}\ldots$), for the graph with optimized weights shown in figure \ref{fig:robustness}.}
      \label{fig:SLEMconv}
\end{figure}
 

\subsection{Robustness to link failure} \label{sec:linkfailure}

Consider a consensus scheme tailored to an initial undirected graph $\Gr$ with weight matrix $P$, featuring positive weight on each link. A sudden edge failure leads to a modified graph $\mathcal{G}'$ with weights $P'$. The standard consensus dynamics will remain stable under this failure, as $P'$ remains doubly-stochastic, see e.g.~\cite{brouwer2011}. Link failure affects the $p_2$-accelerated dynamics as follows.
 
\begin{proposition} \label{prop:linkfailure}
$\bullet$ For certain graphs, the optimal $p_2$ filter for a weight matrix $P$ associated to positive edge weights can become unstable with weights $P'$ in which a link has failed (permanent link failure), and also if the link fails one step out of two (resonant link failure).\newline
$\bullet$ Restricting $z \leq 1/\sqrt{2}$ (resp.~$z \leq (1-\mu(P))/2$) in the polynomial filter ensures stability under permanent (resp. resonant) link failures from a $P$ associated to positive edge weights, as for standard consensus.\newline
$\bullet$ Any $P$ matrix which features some negative edge weights can become unstable under specific link failures, both for standard consensus and with $p_2$ acceleration.\newline
$\bullet$ Robustness to link failure is independent of the choice of $a$ in \eqref{eq:impl}.
\end{proposition}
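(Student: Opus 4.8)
\textbf{Proof plan for Proposition \ref{prop:linkfailure}.} The plan is to treat the four bullets in turn, the first two being the substantive ones and the last two following from simple algebraic observations. For the first bullet I would exhibit a concrete counterexample: take a graph on which the optimal filter pushes $z$ close to its maximal value (e.g.~a graph with $d(\Gr)\le 2$ and near-finite-time convergence, so that $z^2=(\mu^2+\sigma^2)/2$ is not small), and remove one edge. The point is that $P_-$ and $P_+$ factor $P$ through the zeros $\pm z$, and a perturbation $P\mapsto P'=P+\Delta$ with $\Delta$ a (signed) rank-structured Laplacian of the failed link shifts the spectrum so that some eigenvalue $\lambda'_i$ of $P'$ lands outside $[-1,-z]\cup[z,1]$ near $\pm z$ where $|p_2|$ is smallest — but the danger is at the \emph{edges} of the spectrum: if $\lambda'_i$ leaves $[-1,1]$, or if for the resonant case one of $|p_-(\lambda'_i)|,|p_+(\lambda'_i)|$ alone exceeds $1$ and that factor is applied repeatedly, the iteration blows up. I would compute the relevant eigenvalue for the chosen small graph explicitly and check numerically/symbolically that it exits the stability region; a clean choice is a complete bipartite or star-like graph where eigenvalues are available in closed form.

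For the second bullet I would argue by a monotonicity/containment estimate. After a link failure from \emph{positive} weights, $P'$ is still symmetric and doubly stochastic (a failed positive-weight link only redistributes/decreases off-diagonal mass while keeping row sums $1$), hence its eigenvalues lie in $[-1,1]$. For \emph{permanent} failure, stability of the alternating scheme requires $\max_i |p_-(\lambda'_i)\,p_+(\lambda'_i)| = \max_i |(\lambda_i'^2-z^2)/(1-z^2)| < 1$ on the non-consensus eigenspace; since $\lambda_i'^2\in[0,1]$, the extreme value of $|\lambda_i'^2-z^2|$ over $[0,1]$ is $\max(z^2,1-z^2)$, and this is $\le 1-z^2$ exactly when $z^2\le 1/2$, i.e.~$z\le 1/\sqrt2$; one checks that equality is harmless because the consensus eigenvalue $1$ maps to $1$. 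For \emph{resonant} failure (the bad step applied every second iteration) the product over a period is $p_-(\lambda_i')\,p_+(\lambda_i)$ or $p_-(\lambda_i)\,p_+(\lambda_i')$ where one argument is a perturbed eigenvalue and the other an original one; bounding $|\lambda_i'-\lambda_i|$ by the standard perturbation bound for a single failed positive-weight edge (the spectral shift is at most the removed weight, hence at most $1-\mu(P)$-type quantity after normalisation), and using that the original spectrum already satisfies $|\lambda_i|\le\mu(P)$ off the consensus line, I would show the product stays below $1$ precisely when $z\le(1-\mu(P))/2$. The ``as for standard consensus'' clause is immediate since the same doubly-stochastic argument gives $\mu(P')\le 1$.

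For the third bullet, the observation is symmetric to the first: if $P$ has a negative off-diagonal entry, then deleting other edges can break double stochasticity's protective role — more precisely, a negative weight means $P$ is no longer a convex combination consistent with the Perron--Frobenius structure, so one can route a failed edge to leave an eigenvalue of $P'$ with $|\lambda_i'|>1$; this already destabilises \emph{standard} consensus $x\mapsto P'x$, hence a fortiori the $p_2$ scheme. I would phrase this as a one-line reduction to the standard-consensus instability plus a small explicit example. The fourth bullet is purely algebraic: $a$ enters \eqref{eq:impl} only as $P_-=a\,\tfrac{P-z}{1-z}$, $P_+=\tfrac1a\,\tfrac{P+z}{1+z}$, so over any period with equal numbers of $P_-$ and $P_+$ the factors $a$ and $1/a$ cancel, and the only quantity governing blow-up of a given mode is the product of moduli of the $\tfrac{\lambda-z}{1-z}$ and $\tfrac{\lambda+z}{1+z}$ factors, which does not contain $a$; for unequal numbers the surplus $a^{\pm k}$ multiplies \emph{every} eigenvalue including the consensus one, so it only moves the consensus value and never changes which modes diverge relative to it. The main obstacle is the first bullet: constructing a single clean graph-plus-failed-edge pair for which the exit from the stability region is both genuine and checkable in closed form, and making sure the \emph{same} example illustrates both the permanent and the resonant instability rather than needing two separate constructions.
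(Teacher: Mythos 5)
Your treatment of the permanent-failure bound, the negative-weights bullet and the $a$-independence bullet essentially matches the paper: for permanent failure the paper likewise reduces stability to $|p_2(\lambda)|\le 1$ on $[-1,1]$, which for the optimal parabola is exactly $|p_2(0)|=z^2/(1-z^2)\le 1$, i.e.~$z\le 1/\sqrt2$; the negative-weight instability is shown by failing all attractive links so that $P'=I+L'$ with $L'$ positive semidefinite; and the $a$-factors cancel because the scheme keeps alternating. However, there are two genuine gaps. First, for the existence claim (first bullet) you locate the danger ``at the edges of the spectrum: if $\lambda_i'$ leaves $[-1,1]$.'' That cannot happen: removing a positive-weight edge adds $w(e_i-e_j)(e_i-e_j)^T$ to a symmetric doubly stochastic $P$, so $P'$ stays doubly stochastic with spectrum in $[-1,1]$ and eigenvalues only move toward $+1$ --- a fact you yourself invoke in your second bullet, so the two parts of your argument contradict each other. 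The actual mechanism, which the paper's five-node example (Figure \ref{fig:robustness}) exhibits, is that a highly clustered spectrum forces $z>1/\sqrt2$, so $p_2$ dips below $-1$ in the \emph{interior} of $[-1,1]$ near $\lambda=0$, and a failed link drags an eigenvalue into that dip. Your example hunt should target that interior region, not the endpoints.

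Second, your route to the resonant threshold $z\le(1-\mu(P))/2$ --- bounding $|\lambda_i'-\lambda_i|$ by ``the removed weight, hence a $1-\mu(P)$-type quantity'' --- is not the paper's argument and I do not see how it produces that exact constant; the spectral shift from a single failed edge is not controlled by $1-\mu(P)$ in general. The paper instead identifies the worst two-step resonant product directly: $|p_+(\lambda)|=|\lambda+z|/(1+z)\le 1$ for every $\lambda\in[-1,1]$ (worst case $\lambda=1$, i.e.~all links failed, factor exactly $1$), while the dangerous factor is $p_-$ evaluated at the most negative achievable eigenvalue, namely $-\mu(P)$ of the unfailed $P$, giving $\bigl|\tfrac{-\mu(P)-z}{1-z}\bigr|$ over the period. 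Requiring this to be below $1$ is precisely $z<(1-\mu(P))/2$. You would need to replace your perturbation estimate by this worst-case-per-step decoupling (justified by the fact that link failures only move eigenvalues within $[-1,1]$ toward $1$) to recover the stated bound.
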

\begin{proof}
The last point results from the fact that even if links fail at some times, in absence of other casualties, the scheme keeps alternating the two steps of \eqref{eq:impl}, possibly with different $P$ matrices but still with the $a$ factors canceling.
\newline The third point is trivial if we consider the case where all links with positive edge weights (attraction between agents) fail, while all links with negative edge weights (repulsion between agents) remain. Indeed, this leaves only repulsive dynamics and we can write $P' = I + L'$ where $L'$ is a Laplacian with non-negative eigenvalues -- i.e. some eigenvalues of $P'$ will necessarily be larger than 1. Let us now turn to the case of $P$ restricted to positive edge weights.
\newline Under permanent link failure, the polynomial filter is applied to a different set of eigenvalues $\in [-1,1]$, all closer to $1$ than the original ones \cite{brouwer2011}. If $\vert p_2(\lambda) \vert \leq 1$ for all $\lambda \in [-1,1]$ there is no risk of instability. However, if $p_2(0) < -1$, an eigenvalue of $P$ might become close to zero after link failure and lead to instability, see Figure \ref{fig:robustness}. This leads to the condition $z \leq 1/\sqrt{2}$.
\newline Under resonant link failure, when $P'_+$ corresponds to all links failing and $P'_-$ to no link failure, the eigenvector of $\mu(P)$ is multiplied by $\frac{-\mu(P)-z}{1-z}$ over two time steps. This is the worst case: just consider the worst graph achievable with link failures separately for each step. The condition expresses $\left| \frac{-\mu(P)-z}{1-z} \right| < 1$. \end{proof}

The instability under link failure is caused by a potentially unstable region on the polynomial, characteristic of highly clustered spectra. Figure \ref{fig:robustness} shows a 5-node graph whose clustered spectrum has an unstable region in the center, i.e.~ $|p_2|>1$ on some interval inside $[-1,1]$. Indeed, when any of the dashed edges fails, one of the eigenvalues hops into this region. For both permanent and resonant failure the scheme will turn unstable. Constraining the polynomial according to Proposition \ref{prop:linkfailure} restores robustness of the scheme but lowers its acceleration.

   \begin{figure}[htb]
      \centering
      \includegraphics[width=0.9\columnwidth,clip=true,trim=4.5cm 20.5cm 5.5cm 4.4cm]{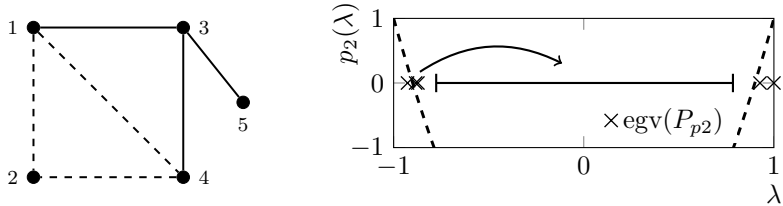}
      \caption{The graph on 5 nodes defined by the union of full and dashed edges on the left, is preconditioned for $p_2$ acceleration with weights {\small $(1,2),(2,4)=0.628$, $(1,4)=0.605$, $(1,3),(3,4)=0.045$, $(3,5)=0.926$}. Its (highly clustered) spectrum and optimal $p_2$ are shown on the right. Failure of any of the dashed edges leads to instability as an eigenvalue moves into the unstable region of $p_2$, as indicated by the arrow on the spectrum.}
      \label{fig:robustness} 
  \end{figure}


\section{Discussion} \label{sec:discussion}

\subsection{Higher Order Polynomial Filtering}
  While this paper focuses on second-order acceleration, several results can be extended to arbitrary order polynomials $p_M$ while others remain open.
  \begin{itemize}
    \item The closed form for the optimal polynomial --- if there actually is a unique one --- still needs to be investigated. It will certainly exploit more graph information than $\mu(P)$ and $\sigma(P)$.
    \item Graphs allowing finite-time convergence can be investigated, with obvious improvement as $M$ tends to $N$, the graph order. Note that still for $p_2$, we do not have an exact answer to the question, among others due to the necessity to consider beneficial symmetry-breakings.
    \item The bound of Proposition \ref{lowerbndr} holds verbatim in the form $\mu_M(P^*_{p_M})\geq \mu(P^*_{\Gr^M})$.
    \item The non-convexity of Proposition \ref{prop:nonconv} holds, replacing each edge in the example of the proof by a path of appropriate length.
    \item For the gradient descent and associated investigation, a closed form for the optimal polynomial would be welcome, else the polynomial parameters can be part of the optimization variables (see e.g. \cite{sandryhaila2014}).
    \item The robustness discussion remains qualitatively the same.
  \end{itemize}
However, relevant situations for \emph{practical implementation} of consensus with high-order polynomial filters would probably be the first question to consider.
  
\subsection{Conclusions}
In this paper we have characterized the possibilities to accelerate linear consensus by second-order polynomial filtering as proposed in \cite{montijano2013}. We have observed that this strategy is beaten by an acceleration based on local memory slots if only an upper and a lower bound are known on the graph spectrum. However when more is known about the graph spectrum, performance can be improved significantly. For a graph with fixed weights the optimal filter and its convergence rate were derived exactly. A preconditioner is proposed which optimizes the edge weights of a given graph, clustering its eigenvalues towards better polynomial acceleration. Unlike for standard consensus this optimization appears to be non-convex. Significant payoffs are obtained especially for graphs with high edge density.
  
A few academic questions remain open. A particular one is whether the possibility of achieving consensus in $k$ steps with a time-varying weight matrix on a fixed graph \cite{hendrickx2014} implies that a $k$-order polynomial filter can also achieve finite-time consensus. A more general question, for which we have been surprised to find no answer in the literature even regarding standard consensus algorithms, is how the diameter of a graph might bound the best achievable convergence rate with optimized edge weights.
  
In a broader scope, we notice that an approach similar to polynomial filtering has been proposed a few decades ago to control LTI systems using periodic memoryless output feedback \cite{aeyels1992}. They show that introducing periodically varying feedback can widen the eigenvalue assignment possibilities. We anticipate that those accelerations based on additional memory or time-dependent actions could also be linked to the memory effects and parallel actions present in quantum random walks \cite{aharonov1993}. We are currently working on formalizing this link in the emerging field of quantum systems engineering.

\appendix

\begin{proposition} \label{prop:nonconv}
The set of all $p_2(P)$ is a non-convex subset of the set of all $P_{\Gr^2}$.
\end{proposition}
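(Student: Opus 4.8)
The plan is to exhibit a single graph $\Gr$ together with two weight matrices $P_{(1)}, P_{(2)}$ on $\Gr$ such that $p_2(P_{(1)})$ and $p_2(P_{(2)})$ are both valid weight matrices on $\Gr^2$, but the midpoint $\tfrac12\, p_2(P_{(1)}) + \tfrac12\, p_2(P_{(2)})$ is \emph{not} of the form $p_2(P)$ for any weight matrix $P$ on $\Gr$. The cleanest obstruction to being of the form $p_2(P)$ is a structural sparsity one: $p_2(P) = (P^2 - z_1 z_2)/((1-z_1)(1-z_2)) \cdot(\text{up to the linear term})$, so off-diagonal entries of $p_2(P)$ outside the edge set of $\Gr$ can only be nonzero through the $P^2$ term, and they are then tied to products of $P$-entries along length-2 paths. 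A convex combination of two such matrices need not respect that multiplicative constraint.

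Concretely I would take a path graph on a few nodes (say $1\!-\!2\!-\!3\!-\!4$), so that $\Gr^2$ adds the edges $(1,3)$ and $(2,4)$. On a path, $p_2(P)$ acts as $\alpha P^2 + \beta P + \gamma I$ with $\beta,\gamma$ fixed by $p_2(1)=1$ and the normalization, and the $(1,3)$-entry of $p_2(P)$ equals $\alpha \,[P]_{12}[P]_{23}$, while the $(2,4)$-entry equals $\alpha\,[P]_{23}[P]_{34}$. First I would choose $P_{(1)}$ to have a large $(1,3)$-image entry but (essentially) zero $(2,4)$-image entry, and $P_{(2)}$ symmetrically the other way around — e.g.\ by making $[P]_{23}$ small in the first and large in the second, compensating with the other path weights. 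Both $p_2(P_{(i)})$ are legitimate $P_{\Gr^2}$ matrices (the appearance of the extra edges is allowed). Then the average has both new edges carrying comparable, nonzero weight; I would show that no $P$ on $\Gr$ can reproduce this, because matching both $(1,3)$ and $(2,4)$ entries of the average forces $[P]_{23}$ to take two incompatible values (one dictated by $\alpha[P]_{12}[P]_{23}$, the other by $\alpha[P]_{23}[P]_{34}$, once the ratio of the two target entries is pinned), after also matching the diagonal and the nearest-neighbour entries which overdetermine $\alpha,\beta,\gamma$ and hence $z$.

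To make this airtight I would argue in two stages: (i) from the value of $p_2(P)$ on the diagonal and on the $\Gr$-edges, recover $\alpha$ (equivalently $z$) and the entries of $P$ uniquely — this is a small linear/algebraic inversion that works generically, and I would pick the numbers so it is nondegenerate; (ii) check that the recovered $P$ then predicts the "wrong" values on the $\Gr^2$-only edges, contradicting the assumed average. The main obstacle is bookkeeping: ensuring the chosen $P_{(1)},P_{(2)}$ are genuinely admissible weight matrices (symmetric, correct row behaviour, eigenvalues in $[-1,1]$ with $1$ simple on the consensus direction) while still producing the needed asymmetry between the two new-edge slots, and ensuring the inversion in step (i) is genuinely forced rather than merely generic. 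A path on $4$ nodes, or if more slack is needed a path on $5$ nodes, should give enough room; the same construction extends to higher-order $p_M$ by replacing each edge in the example by a path of suitable length, as noted in the discussion section.
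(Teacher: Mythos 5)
There is a genuine gap. Your overall strategy --- exhibit two weight matrices on a fixed $\Gr$ whose $p_2$-images average to something outside the set of all $p_2(P)$ --- is exactly the paper's, but your choice of example (the path on $4$ nodes) destroys the combinatorial obstruction that makes the argument work, and the quantitative argument you substitute for it is neither carried out nor clearly true. First, a local inconsistency: on the path $1\!-\!2\!-\!3\!-\!4$ both new entries $[p_2(P)]_{13}=\alpha[P]_{12}[P]_{23}$ and $[p_2(P)]_{24}=\alpha[P]_{23}[P]_{34}$ are proportional to $[P]_{23}$, so ``making $[P]_{23}$ small in the first and large in the second'' cannot give you one large and one (essentially) zero new-edge entry; to decouple them you must instead kill $[P]_{34}$ in $P_{(1)}$ and $[P]_{12}$ in $P_{(2)}$. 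But once you do that, the average has \emph{all five} entries $(1,2),(2,3),(3,4),(1,3),(2,4)$ potentially nonzero, which is the generic support of $p_2(P)$ for a fully weighted path --- there is no sparsity contradiction left. You are then forced into your stages (i)--(ii), and here a dimension count is against you: $P$ on the path has $3$ free edge weights (the diagonal being fixed by row sums) and $p_2$ has $2$ free parameters, for a total of $5$, while the target $P_{\Gr^2}$ has exactly $5$ free edge weights. The map is square, its image is generically full-dimensional, and the midpoint of two image points may perfectly well lie in the image; nothing in your sketch rules this out, and the ``inversion is genuinely forced rather than merely generic'' step is precisely the missing proof.

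The paper sidesteps all of this by choosing the star $\Kg(1,4)$, where the obstruction is purely combinatorial: take $P'$ supported on spokes $(1,2),(1,3)$ and $P''$ on spokes $(1,4),(1,5)$, so that $p_2'(P')$ and $p_2''(P'')$ acquire the leaf--leaf entries $(2,3)$ and $(4,5)$ respectively and nothing else off the star. Their average has nonzero $(2,3)$ and $(4,5)$ entries but \emph{zero} entries on $(2,4),(2,5),(3,4),(3,5)$. Any $p_2(P)$ reproducing $(2,3)$ and $(4,5)$ needs $\alpha[P]_{12}[P]_{13}\neq 0$ and $\alpha[P]_{14}[P]_{15}\neq 0$, hence all four spokes nonzero, hence $\alpha[P]_{12}[P]_{14}\neq 0$ on the entry $(2,4)$ --- contradiction, with no algebraic inversion needed. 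I would recommend you either switch to an example with this structure (two disjoint ``triangles-to-be'' sharing a hub, so that realizing both forces cross terms the average does not have), or, if you insist on the path, actually solve the $5\times 5$ matching system and exhibit its inconsistency for concrete weights.
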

\begin{proof}
Consider $\Kg(1,4)$, the star graph on 5 nodes (see fig.~\ref{fig:convcross}).
Let a given $P'$ have equal positive weights only on the edges (1,2) and (1,3), and $P''$ have the same positive weights only on (1,4) and (1,5). Then for $p'_2$, $p''_2$ some second-order polynomials, $p'_2(P')$ and $p''_2(P'')$ correspond to positive weights on respectively (1,2),(1,3),(2,3) and (1,4),(1,5),(4,5). Their convex combination $1/2\, p''_2(P'') + 1/2\, p'_2(P')$ cannot be generated by any $p_2(P)$, because (strong) weights on (2,3), (4,5) require positive weights on all 4 edges of the star, which in turn unavoidably imply (non-negligible) positive weights on (2,4),(2,5),(3,4),(3,5) in any second-order $p_2(P)$.
\end{proof}

\begin{figure}[htb]
\centering
\includegraphics[width=0.7\columnwidth]{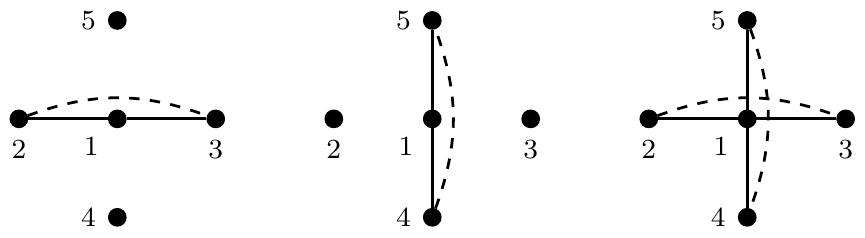}
\caption{\textit{Left:} $P'$, dashed line added for $p_2'(P')$. \textit{Center:} idem for $P''$ and $p_2''(P'')$. \textit{Right:} Convex combination.}
\label{fig:convcross}
\end{figure}

\section*{Acknowledgment}
The authors thank F.Ticozzi for discussions that have drawn their interest to this topic and B.Gerencs\'er for fruitful discussions mainly about Section \ref{sec:favorablegraphs} and Proposition \ref{prop:nonconv}. The authors are partially supported by the Interuniversity Attraction Poles program DYSCO, funded by the Belgian Science Policy Office.

  \bibliography{p2acc-paper}

\end{document}